\newtheorem{theorem}{Theorem}[section]
\newtheorem{proof}{Proof}[section]
\newtheorem{corollary}[theorem]{Corollary}
\newcommand{\cutac}[1]{\ac [cut] \ca} 
\newcommand{\ma}{\color{magenta} }  
\newcommand{\am}{ \color{black}} 
\newcommand{\cutma}[1]{\ma [cut] \am} 
\newcommand{\bA}{\mathbf{A}}
\newcommand{\bD}{\mathbf{D}}
\newcommand{\R}{\mathbb{R}}
\newcommand{\bY}{\mathbf{Y}}
\newcommand{\bb}{\mathbf{b}}
\newcommand{\bc}{\mathbf{c}}
\newcommand{\vf}{\mathbf{f}}  
\newcommand{\bu}{\mathbf{u}}
\newcommand{\by}{\mathbf{y}}
\begin{document}

\title{Numerical posterior distribution error control\\
and expected Bayes Factors in the Bayesian\\
Uncertainty Quantification of inverse problems}

\author{J. Andr\'es Christen$^{1,2}$, Marcos A. Capistr\'an$^{1}$\\
and Miguel \'Angel Moreles$^{1}$}


\date{29 AUG 2017}

\maketitle

\bibliographystyle{chicago}

\begin{abstract}
In bayesian UQ most relevant cases of forward maps
(FM, or regressor function) are defined in terms of a system of (O, P)DE's with intractable
solutions.  These necessarily involve a numerical method to find approximate versions of such
solutions which lead to a numerical/approximate posterior distribution.  In the past decade, several
results have been published on the regularity conditions required
to ensure converge of the numerical to the theoretical posterior.
However, more practical guidelines are needed to ensure a suitable working numerical posterior.
\cite{Capistran2016} prove for ODEs that the Bayes Factor (BF) of the approximate vs the theoretical model tends
to 1 in the same order as the numerical method approximation order.  In this work we generalize the latter
paper in that we consider 1) the use of expected BFs, 2) also PDEs, 3) correlated observations,
which results in, 4) more practical and workable guidelines in a more realistic multidimensional setting.
The main result is a bound on
the absolute global errors to be tolerated by the FM numerical solver, which we illustrate
with some examples.  Since the BF is kept near 1 we expect that
the resulting numerical posterior is basically  indistinguishable from the theoretical
posterior, even though we are using an approximate numerical FM.  The method is illustrated
with an ODE and a PDE example, using synthetic data.
\end{abstract}

\bigskip
\noindent
KEYWORDS: Inverse Problems, Bayesian Inference, Bayes factors, ODE solvers, PDE solvers.


\newpage

\section{Introduction}\label{sec:intro}

Bayesian Uncertainty Quantification (UQ) has attracted substantial attention in recent years,
covering a wide range of applications both in well established fields as well as in emerging
areas.  Some recent examples may be found in
\cite{ZHUetAl2011, CAIetAl2011, FALLetAl2011, CHAMAetAl2012,
NISSINENetAl2011, KOZAWAetAl2012,
CUIetAl2011, WAN&ZABARAS2011, HAZELTON2010, KAIPIO&FOX2011}.  For reviews on the subject see  
\cite{KAIPIO&FOX2011, WATZENIG&FOX2009, WOODBURY2011,Fox2013}.  

The usual parametric (finite dimensional)
Bayesian formulation of Inverse Problems, in broad terms, is that 
given a Forward Map (FM) $F_\theta$, a noise model is assumed for the observations
$y_j \sim G_{\sigma}(F^j_\theta)$, for some noise level $\sigma$
and typically additive gaussian errors with known standard deviation $\sigma$  are assumed.
This observation model creates a probability density of all data $\bY$ given all parameters
$\Phi$ namely $P_{ \bY | \Phi} (\by | \theta,\sigma)$.  For fixed data $\by$ this forms
the likelihood, regarding the latter as a function of $\theta,\sigma$, and is the
basis of the statistical analysis of Inverse Problems.  Using Bayesian inference
one establishes a prior distribution $P_{\Phi}(\theta,\sigma)$ and defines the posterior
distribution
\begin{equation}\label{eqn.exact_post}
P_{ \Phi | \bY }( \theta,\sigma | \by ) =  \frac{P_{ \bY | \Phi }( \by |  \theta,\sigma) P_{\Phi}(\theta,\sigma)}{P_{\bY} (\by)} .
\end{equation}
This probability distribution on the unknowns $\theta$ and $\sigma$ quantifies the uncertainty
on the possible values for these parameters coherent with the data $\by$.
However, the common denominator in this particular Bayesian inference problem is
that we do not have an analytical or computationally simple and precise implementation
of the FM.

Instead, a numerical approach is required to create a solver
and find a numerical approximation of the FM $F^{\alpha}_\theta$, for some discretization
parameter $\alpha$ (eg. step size, grid norm, terms in a series, etc. concrete examples will be
given in section  \ref{sec.setting}).  Since we can only use the numeric approximation,
this in turn leads to a numeric likelihood $P^\alpha_{ \bY | \Phi} (\by | \theta,\sigma)$
and therefore a numeric posterior
\begin{equation}\label{eqn.num_post}
P^\alpha_{ \Phi | \bY }( \theta,\sigma | \by ) =  \frac{P^\alpha_{ \bY | \Phi }( \by |  \theta,\sigma) P_{\Phi}(\theta,\sigma)}{P^\alpha_{\bY} (\by)} .
\end{equation}
$P^\alpha_{\bY} (\by) = \int P^\alpha_{ \bY | \Phi }( \by |  \theta,\sigma) P_{\Phi}(\theta,\sigma) d\theta d\sigma$
and 
$P_{\bY} (\by) = \int P_{ \bY | \Phi }( \by |  \theta,\sigma) P_{\Phi}(\theta,\sigma) d\theta d\sigma$ are the normalization constants of the two models,  also called the \emph{marginal likelihoods} of data $\by$. 

Numerical methods are designed so as, if the discretization tends to zero $|\alpha| \rightarrow 0$,
for some norm or functional $|\cdot|$, then the numeric FM tends to the theoretical FM at some
order $O(|\alpha|^p)$.  This is the global error control for the numerical method or solver, which we dicuss in detail in section \ref{sec.global_err}.  However,
it is of great interest to prove that the same happens with the theoretical
vs the numeric posteriors in (\ref{eqn.exact_post}) and (\ref{eqn.num_post}) respectively, in order
to make sense of our Bayesian approach.

Recently a number of papers have dealt with this problem in a theoretical sense by establishing
regularity conditions so as 
$$
\lim_{|\alpha| \rightarrow 0}
||  P^\alpha_{ \Phi | \bY }( \theta,\sigma | \by ) - P_{ \Phi | \bY }( \theta,\sigma | \by ) || = 0 ,
$$
for some (eg. Hellinger) measure metric $|| \cdot ||$; see \cite{COTTER2010} for a review.
This forms a sound theoretical basis for the Bayesian analysis of inverse problems.  However,
in applications we have to choose a  discretization $\alpha$.  More practical guidelines
are needed to choose the numerical solver precision and how this controls the level
of approximation between  $P^\alpha_{ \Phi | \bY }( \theta,\sigma | \by ) $ and
$P_{ \Phi | \bY }( \theta,\sigma | \by ) $.

On the other hand \cite{Capistran2016} present an approach to address the above problem
using Bayes factors (BF; the odds in favor) of the numerical model vs the theoretical
model (further details will be given in section  \ref{sec.setting}).  With equal prior probability
for both models, this BF is $\frac{P^\alpha_{\bY} (\by)}{P_{\bY} (\by)}$.
In an ODE framework, these odds are proved in \cite{Capistran2016} to converge to 1 (that is, both models
would be equal) in the same order as the numerical solver used.
For high order solvers \cite{Capistran2016} illustrates, by reducing the step size
in the numerical solver,  that there should exist a point at which the BF is basically 1,
but for fixed  discretization $\alpha$ (step size) greater than zero.  This is the main point made by \cite{Capistran2016}:
it could be possible to calculate,  for solver orders of 2 or more, a threshold for the tolerance such
that the \textit{numerical} posterior is basically equal to the theoretical posterior so, although we
are using an approximate FM, the resulting posterior is error free.
\cite{Capistran2016} illustrate, with some examples, that
such optimal solver  discretization leads to basically no differences in the numerical and
the theoretical posterior (since the BF is basically 1).  Moreover, since for most solvers its
computational complexity goes to infinity as $|\alpha| \rightarrow 0$, using the optimal $\alpha$ led
in their examples to a 90\% save in CPU time.

However,  \cite{Capistran2016} still has a number of shortcomings.  First, it depends crucially
on estimating the normalizing constants $P^\alpha_{\bY} (\by)$ from Monte Carlo
samples of the unnormalized posterior, for a range of  discretizations $|\alpha|$.  This is
a very complex estimation problem and is the subject of current research and is in fact
very difficult to reliably estimate these normalizing constants in mid to high dimension
problems.  Second, \cite{Capistran2016} approach is as yet incomplete
since one would need to decrease
$|\alpha|$ systematically, calculating  $P^\alpha_{\bY} (\by)$ to eventually estimate
$P_{\bY} (\by)$, which in turn will pin point a  discretization at which both models are
indistinguishable.  Being this a second complex estimation problem, the main difficulty
here is that one has already calculated the posterior for small $|\alpha|$ and therefore
it renders useless the selection of the optimal step size.

To improve on \cite{Capistran2016}, the idea of this paper
is to consider the \textit{expected} value of the BFs, before data is observed.  We will try to bound this
expected BF to find general guidelines to establish error bounds on the numerical solver,
depending on the specific problem at hand and the sample design used, but not on particular
data.  These guidelines will be solely regarding the forward map and, although conservative,
represent useful bounds to be used in practice.

We do not discuss
in this paper the infinite dimension counterpart of this approach,
of interest when inference is needed over function spaces as it is the case in some general PDE
inverse problems, see for example \cite{COTTERetAl2009, Dunlop2015} and references therein.
As mentioned above, we restrict ourselves to the finite dimensional parametric case where we
establish our results.

The paper is organized as follows. Our formal setting will be discussed in section \ref{sec.setting}. 
In section \ref{sec:main} we present our main result, including several comments
of some implications and practical guidelines for its use.  In sections \ref{sec:exaODE} and~\ref{sec:exaPDE}
we present prove of concept examples, considering an ODE and a PDE, respectively.
In both cases, using error estimated on the numeric forward maps we were able to very sustantially
reduce CPU time while obtaining basically the same posterior.
Finally, a discussion of the paper is presented in section \ref{sec:discussion}.

\section{Setting}\label{sec.setting}

Assume that we observe a process $\by = (y_1,\dots,y_n)$ at \textit{locations}
$x_1, \dots, x_n \in D \subset \R^m$.    
This is a general setting, to include ODEs and PDEs and other inverse problems, in which the
domain may include, for example, space and time: $x_i = [ (z_{ix}, z_{iy}), t_i ]$.  That is, $x_i$
is an observation at coordinates $(z_{ix}, z_{iy})$ and at time $t_i$, etc.

We assume that the Forward Map $F_{\theta} : \R^m  \rightarrow \R^q$ is well defined for all
parameters $\theta$ where $\theta \in A \subset \R^d$.
Typically, as mentioned above, $F_{\theta}(x)$, for all $x \in D$, is the solution of a system of
ODE's or PDE's. 
This means that $F_{\theta}(x)$ are the $q$ state variables representing the solution of the
ODE or PDE system, with parameters $\theta$, at location $x$.
In many cases, specailly dealing with PDEs, the actual unknown is a function in which the inference
problem at hand is infinite dimensional.  As mentioned in the introduction, in this paper we confine
ourselves to the finite dimensional parametric problem, that is, the unknown is $\theta$ of dimension $d$.  
The initial or boundary conditions are taken as known, although these may be turned to be part of the unknown parameters, using common techniques.

Let $f: \R^q \rightarrow \R$ be the observational functional, in the sense that
$y_i$ is an observation of $f(F_{\theta}(x_i))$.  For example, $f(F_{\theta}(x_i))$ is one particular
state variable, for which we have observations. We only consider univariate observations at
each location $x_i$.

We assume gaussian errors on the observations, however, we consider the possibility of correlated observations, namely, let
$\vf_{\theta} = ( f(F_{\theta}(x_1)), \ldots , f(F_{\theta}(x_n)) )'$ then
$$
\by \mid \theta, \sigma^2, \bA \sim N_n (  \vf_{\theta}, \sigma^{-2} \bA ),
$$
where $\sigma^{-2} \bA$ is the \textit{precision} matrix (inverse of the variance-covariance matrix)
of the $n$-dimensional Gaussian distribution
with mean $\vf_{\theta}$.  $\bA^{-1}$ is a correlation matrix with some correlation structure
on $D$.  $\bA^{-1}$ will be taken as known, but $\sigma^{2}$ will be considered unknown in general.
This is a particular covariance structure, very common in time series or spatial statistics.
Indeed, if $\bA = \mathbf{I}$ we go back to the common uncorrelated error structure.
A specific example is to use a correlation function $\rho$ an a metric $d(x_i,x_j)$ in $D$ to
give $\bA^{-1} = (\rho(d(x_i,x_j)))$, and therefore correlation decreases as the location points separate.
This is an isotropic correlation structure.  Many other structures may be considered and this has been
extensively studied in the statistics literature~\cite{christakos:1992}.  Note that the marginal distribution
for the observations is
\begin{equation}\label{eqn.obsmodel}
y_i  = f(F_{\theta}(x_i)) + \varepsilon_i ,~~ \varepsilon_i \sim \mathcal{N}(0, \sigma b_i) ,
\end{equation}
where $b_i^2 =  [A^{-1}]_{ii}$, that is, $\sigma b_i$ is the standard error of the noise.

Let also
$\vf_{\theta}^{\alpha} = ( f(F_{\theta}^{\alpha}(x_1)), \ldots , f(F_{\theta}^{\alpha}(x_n)) )'$ and
$\by \mid \theta, \sigma^2, \bA \sim N_n (  \vf_{\theta}^{\alpha}, \sigma^{-2} \bA )$ for
the numerical model.  From this the likelihood of the approximated model is
\begin{equation}\label{eqn.ll}
P^\alpha_{ \bY | \Phi} (\by | \theta,\sigma) =
(2 \pi \sigma^2)^{-\frac{n}{2}} \left| \bA \right|^{ \frac{1}{2} }
\exp\left\{-\frac{1}{2 \sigma^2} (\by - \vf_{\theta}^{\alpha})' \bA (\by - \vf_{\theta}^{\alpha})\right\} ,
\end{equation}
with the corresponding expression for the exact model.  

Regarding the prior distribution for the parameters $P_{\Phi}(\theta,\sigma)$, we assume
that $P_{\Phi}(\theta,\sigma) = g(\sigma) P_\Theta (\theta)$.  A common and reasonable
assumption in which the prior on the observational noise is independent
of the prior regarding the actual model parameters. 

\subsection{Global Error Control}\label{sec.global_err}

We assume we use a numerical method to obtain an approximation of the Forward Map,
that is $F^{\alpha}_{\theta}(x_i)$, for some  discretization $\alpha$.
As a general setting for approximation strategies, the 
numerical method or solver  discretization $\alpha$ may now be multidimensional.
We assume that the global error control of the solver states that
\begin{equation}\label{eqn:global_error}
|| F^{\alpha}_{\theta}(x_i) -  F_{\theta}(x_i) || \leq K_{x_i , \theta} |\alpha|^p ,
\end{equation}
for some functional $| \cdot |$ for the  discretization.  That is, the solver is of order $p$.
For example $\alpha = ( \Delta x, \Delta t )$ in finite element PDE solvers, etc. and,
perhaps $|\alpha| = sup |\alpha_i|$.  In section \ref{sec:exaODE} we present
an ODE example in which $\alpha = \Delta t$, ie. the time step size, $| \cdot |$ is the identity
and the solver has order $p=4$.  Moreover, in section \ref{sec:exaPDE} we present
a PDE example where $\alpha = ( \Delta x, \Delta t )$, $|\alpha| = \Delta x$ and
the solver has order $p=2$.

Using this asymptotic behavior of the global error and assuming that $f$ is differentiable, we can write
\begin{equation}\label{eqn:Dh}
f(F^{\alpha}_{\theta}(x)) - f(F_{\theta}(x)) =
\nabla f \left( F_{\theta}(x) \right)(F^{\alpha}_{\theta}(x) -F_{\theta}(x)) +  O(|\alpha|^{2p}) = O(|\alpha|^p) . 
\end{equation}
for all $x \in D$ and $\theta \in A$.  We further assume that $\nabla f$ is bounded and that
$K_{x_i , \theta} < K'$ for all locations $x_i$ and all $\theta \in A$.
From this we conclude that the global error is
\begin{equation}\label{eqn.global_error}
| f(F^{\alpha}_{\theta}(x_i)) - f(F_{\theta}(x_i)) | \leq K |\alpha|^p ,
\end{equation}
for some global $K$.  This approximation is also of order $p$.

\subsection{Bayes Factors}\label{sec.BF}

As mentioned in the introduction the Bayes Factor (BF) of the exact vs the approximated model,
for some fixed data $\by$ is $\frac{P^\alpha_{\bY} (\by)}{P_{\bY} (\by)}$.  Assuming an equal
prior probability for both models, the BF is the posterior odds of one model against the other
(ie. $\frac{p}{1-p}$ with $p=\frac{P^\alpha_{\bY} (\by)}{P_{\bY} (\by) + P_{\bY} (\by)}$ the posterior probability
of the numerical model with  discretization $\alpha$). In terms of model equivalence
an alternative expression conveying the same odds is
$$
\frac{1}{2} \left| 1 -  \frac{P^\alpha_{\bY} (\by)}{P_{\bY} (\by)} \right| .
$$
We will call the above absolute deviance of the BF from 1 the ``ABF''.
In terms of the Jeffreys scale if the ABF is less than 1 (ie. 1 $\leq$ BF $\leq$ 3)
the difference regarding both models is ``not worth more than a bare mention''  \citep{KASS1995, Jeffreys61}.
An even more stringent requirement would be an ABF less than
$\frac{1}{20} = 0.05$, for example, to practically ensure
no difference in both the numerical and the approximated posteriors.
In the next section we see how to bound the \textit{expected} ABF in terms of the 
absolute maximum global error for the numeric FM.

\section{Main result}\label{sec:main}

We now present our main result.

\begin{theorem}\label{theo.main}
For the Expected ABF (EABF) we have that 
\begin{equation}\label{eqn.main}
|| P_{\bY} (\cdot) - P^{\alpha}_{\bY} (\cdot) ||_{TV} =
\int \frac{1}{2} \left| 1 - \frac{P^{\alpha}_{\bY} (\by)}{P_{\bY} (\by)} \right| P_{\bY} (\by) d\by 
\leq  \sqrt{\frac{1}{2 \pi}} \frac{n}{\sigma^*} K |\alpha|^p  \frac{b_i}{n} \sum_{i=1}^n \sum_{j=1}^n | a_{ij} | .
\end{equation}
where $\sigma^* = \left( \int \frac{1}{\sigma} g(\sigma) d\sigma \right)^{-1}$.
\end{theorem}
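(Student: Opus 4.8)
The plan is to reduce the total variation distance between the two marginal likelihoods to a pointwise (in $\theta,\sigma$) comparison of two Gaussians that share the precision matrix $\sigma^{-2}\bA$ and differ only in their means $\vf_{\theta}$ and $\vf_{\theta}^{\alpha}$. The first equality is just the definition of the TV distance between two densities: writing $\| P_{\bY} - P^{\alpha}_{\bY} \|_{TV} = \frac{1}{2}\int | P_{\bY}(\by) - P^{\alpha}_{\bY}(\by) |\, d\by$ and factoring $P_{\bY}(\by)$ out of the integrand reproduces the stated expected-ABF integral. For the bound I would first expand both marginals as prior averages of the conditional likelihoods (\ref{eqn.ll}), apply the triangle inequality under the $d\theta\,d\sigma$ integral, and invoke Fubini to exchange the order of integration. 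This leaves $\frac{1}{2}\int P_{\Phi}(\theta,\sigma)\bigl(\int | P_{\bY|\Phi}(\by|\theta,\sigma) - P^{\alpha}_{\bY|\Phi}(\by|\theta,\sigma) |\, d\by \bigr)\,d\theta\,d\sigma$, i.e. the prior average of the $L^1$ distance between the exact and numerical Gaussian likelihoods.

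The core step is to bound that inner $L^1$ distance for fixed $\theta,\sigma$. I would interpolate between the two mean vectors by $\vf_t = \vf_{\theta}^{\alpha} + t(\vf_{\theta} - \vf_{\theta}^{\alpha})$, $t \in [0,1]$, and write the density difference as $\int_0^1 \frac{d}{dt} g_t\, dt$ via the fundamental theorem of calculus, where $g_t$ is the Gaussian density with mean $\vf_t$. Differentiating the Gaussian with respect to its mean gives $\frac{\partial}{\partial \mu_i} g_t = g_t\, \sigma^{-2} [\bA(\by - \vf_t)]_i$, so that after taking absolute values, integrating in $\by$ and using Fubini again, the problem reduces to Gaussian first absolute moments. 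Writing $e_i = f(F_{\theta}(x_i)) - f(F^{\alpha}_{\theta}(x_i))$ for the per-location error, and bounding $|[\bA(\by - \vf_t)]_i| \le \sum_j |a_{ij}|\, |y_j - \vf_{t,j}|$ by the triangle inequality, I would use that under $g_t$ the residual $y_j - \vf_{t,j}$ is $N(0, \sigma^2 b_j^2)$ with $b_j^2 = [\bA^{-1}]_{jj}$, hence $\int g_t\, |y_j - \vf_{t,j}|\, d\by = \sqrt{2/\pi}\,\sigma b_j$. Collecting the powers of $\sigma$ leaves the inner $L^1$ distance bounded by $\sqrt{2/\pi}\,\sigma^{-1} \sum_i |e_i| \sum_j |a_{ij}|\, b_j$.

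To finish I would insert the global error control (\ref{eqn.global_error}), $|e_i| \le K|\alpha|^p$ uniformly in $i$ and $\theta$, treat the marginal noise scales $b_j$ as a common value $b_i$ so that it factors out, and assemble $\sum_i \sum_j |a_{ij}|$ together with the constant $\frac{1}{2}\sqrt{2/\pi} = \sqrt{\frac{1}{2\pi}}$. The only remaining $\theta,\sigma$ dependence is the factor $\sigma^{-1}$, so integrating against the prior $P_{\Phi}(\theta,\sigma) = g(\sigma) P_\Theta(\theta)$ and using $\int P_\Theta(\theta)\, d\theta = 1$ together with $\int \sigma^{-1} g(\sigma)\, d\sigma = 1/\sigma^*$ yields exactly the claimed right-hand side $\sqrt{\frac{1}{2\pi}}\,(n/\sigma^*)\, K|\alpha|^p\, (b_i/n) \sum_{i,j} |a_{ij}|$.

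The main obstacle I anticipate is the fixed-parameter Gaussian comparison: justifying differentiation under the integral sign along the mean-interpolation path and evaluating the resulting absolute first moments cleanly, while keeping the correlation matrix $\bA$ intact rather than diagonalizing it. Handling the off-diagonal structure through the crude triangle-inequality bound $|[\bA(\by - \vf_t)]_i| \le \sum_j |a_{ij}|\, |y_j - \vf_{t,j}|$ is precisely what produces the (conservative) double sum $\sum_{i,j}|a_{ij}|$; a sharper route using the exact TV distance between equal-covariance Gaussians, an error-function of the Mahalanobis norm of $\vf_{\theta} - \vf_{\theta}^{\alpha}$, would give $\sqrt{(\vf_{\theta} - \vf_{\theta}^{\alpha})' \bA (\vf_{\theta} - \vf_{\theta}^{\alpha})}$ in place of the double sum, so I should confirm that the looser bound is the intended, more practical guideline. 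I would also make explicit the mild regularity needed (the Gaussian tails render all interchanges of integration and differentiation valid) and the assumption that $b_j$ is constant across locations.
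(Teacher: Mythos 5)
Your proof is correct, and it reaches exactly the stated bound, but by a genuinely different route than the paper. The paper works with the likelihood ratio $R_{\alpha}(\theta) = P^\alpha_{\bY|\Phi}/P_{\bY|\Phi}$, Taylor-expands $e^{-x}=1-x+O(x^2)$ to get $R_{\alpha}(\theta)-1 = \sigma^{-2}\bD_{\theta}^{\alpha}\bA(\by-\vf_{\theta}) + O(|\alpha|^{2p})$, expresses the ABF as a posterior expectation of this quantity, and only then averages over $\by$ --- explicitly ``ignoring the higher order term'' $O(|\alpha|^{2p})$, so the paper's inequality is really asymptotic in $|\alpha|$. You instead use convexity of total variation under mixtures (triangle inequality plus Tonelli reduces the problem to the prior-averaged $L^1$ distance between the two conditional Gaussians) and then an exact mean-interpolation argument: writing $g_1-g_0=\int_0^1 \tfrac{d}{dt}g_t\,dt$ along the segment joining $\vf_{\theta}^{\alpha}$ to $\vf_{\theta}$ and using $\partial g_t/\partial\mu_i = g_t\,\sigma^{-2}[\bA(\by-\vf_t)]_i$. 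Every step of yours is an exact inequality, so you obtain the bound non-asymptotically, with no discarded remainder --- in this respect your argument is tighter than the paper's. Both proofs then coincide in the final mechanics: the Gaussian mean absolute deviation $\sqrt{2/\pi}$, the crude entrywise bound that produces $\sum_{i}\sum_{j}|a_{ij}|$ (your $\sum_j |a_{ij}|\,|y_j-\vf_{t,j}|$ plays the role of the paper's bound (\ref{eqn.bound2}) on $\bu=\bD_{\theta}^{\alpha}\bA$), the uniform error control (\ref{eqn.global_error}), and the prior integration giving $1/\sigma^*$; both also need the (implicit, sloppily indexed) assumption that $b_i$ is constant across locations, which you rightly flag. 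One thing the paper's route buys that yours does not: its intermediate pointwise expansion (\ref{eqn.ABF}) of the ABF for \emph{fixed} data $\by$ is exactly what the subsequent corollary reuses to show the fixed-data Bayes factor tends to one at rate $O(|\alpha|^p)$; your mixture/TV decomposition averages over $\by$ from the start and so does not directly yield that statement.
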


\begin{proof}
As in \cite{Capistran2016} define the likelihood ratio
$$
R_{\alpha}(\theta) = \frac{P^\alpha_{ \bY | \Phi} (\by | \theta,\sigma)}
{P_{ \bY | \Phi} (\by | \theta,\sigma)} .
$$
Using (\ref{eqn.ll}) and after a simple manipulation we see that
$$
R_{\alpha}(\theta) =
\exp\left[ -\frac{1}{2\sigma^2} \left\{
-2 (\vf_{\theta}^{\alpha} - \vf_{\theta})'\bA(\by - \vf_{\theta}) +
(\vf_{\theta}^{\alpha} - \vf_{\theta})'\bA(\vf_{\theta}^{\alpha} - \vf_{\theta}) \right\}\right] .
$$
Let $\bD_{\theta}^{\alpha} = (\vf_{\theta}^{\alpha} - \vf_{\theta})'$.  Using
(\ref{eqn.global_error}) we have
\begin{equation}\label{eqn.quad_bound}
|\bD_{\theta}^{\alpha} \bA (\bD_{\theta}^{\alpha})'| <
||\bD_{\theta}^{\alpha}||_2^2 ||\bA||_2 = O(|\alpha|^{2p}) 
\end{equation}
where $||\bA||_2$ refers to the induced $L_2$ matrix norm.  Therefore for $|\alpha|$ small
$$
R_{\alpha}(\theta) -1 = \frac{1}{\sigma^2}  \bD_{\theta}^{\alpha} \bA(\by - \vf_{\theta}) + O(|\alpha|^{2p})
$$
since $e^{-x} = 1 - x + O(x^2)$ for $|x|$ small.  Note now that
$$
P^\alpha_{\bY} (\by) = P_{\bY} (\by) +
\int P_{ \bY | \Phi }( \by |  \theta,\sigma) (R_{\alpha}(\theta) -1) P_{\Phi}(\theta,\sigma) d\theta d\sigma .
$$
Dividing by $P_{\bY} (\by)$ we have
\begin{eqnarray}
\left| 1 -  \frac{P^\alpha_{\bY} (\by)}{P_{\bY} (\by)} \right| &=&
\left| \int \left(\frac{1}{\sigma^2}  \bD_{\theta}^{\alpha} \bA(\by - \vf_{\theta}) + O(|\alpha|^{2p})\right)
P_{ \Phi | \bY }( \theta,\sigma | \by ) d\theta d\sigma \right| \nonumber \\
&\leq&  \frac{1}{\sigma^2}  \int \left|  \bD_{\theta}^{\alpha} \bA(\by - \vf_{\theta})\right|
P_{ \Phi | \bY }( \theta,\sigma | \by ) d\theta d\sigma  + |O(|\alpha|^{2p})| .\label{eqn.ABF}
\end{eqnarray}
Now for the EABF we have
$$
\int \left| 1 - \frac{P^{\alpha}_{\bY} (\by)}{P_{\bY} (\by)} \right| P_{\bY} (\by) d\by \leq
\frac{1}{\sigma^2}  \int \int \left|  \bD_{\theta}^{\alpha} \bA(\by - \vf_{\theta})\right|
P_{ \bY | \Phi }( \by |  \theta,\sigma) d\by P_{\Phi}(\theta,\sigma) d\theta d\sigma ,
$$
ignoring the higher order term.
Let $\bu = \bD_{\theta}^{\alpha} \bA$ with $\bu = (u_i)$, then
\begin{equation}\label{eqn.bound1}
 \sigma^{-2} \left|  \bD_{\theta}^{\alpha} \bA(\by - \vf_{\theta})\right| =
 \sigma^{-2} \left| \sum_{i=1}^n u_i (y_i -  f(F_{\theta}(x_i))) \right|
 \leq \frac{1}{\sigma} \sum_{i=1}^n |u_i| b_i \left| \frac{y_i -  f(F_{\theta}(x_i))}{\sigma b_i} \right| .
\end{equation}
Since $\int |x| \frac{1}{\sqrt{2 \pi}} e^{-\frac{x^2}{2}} dx = \sqrt{\frac{2}{\pi}}$ we have
$$
\int \left| 1 - \frac{P^{\alpha}_{\bY} (\by)}{P_{\bY} (\by)} \right| P_{\bY} (\by) d\by \leq
\sqrt{\frac{2}{\pi}} \int \frac{b_i}{\sigma} \sum_{i=1}^n |u_i| P_{\Phi}(\theta,\sigma) d\theta d\sigma ,
$$
and therefore
\begin{equation}\label{eqn.main1}
\int \frac{1}{2} \left| 1 - \frac{P^{\alpha}_{\bY} (\by)}{P_{\bY} (\by)} \right| P_{\bY} (\by) d\by \leq
 \sqrt{\frac{1}{2 \pi}} \frac{b_i}{\sigma^*} \int \sum_{i=1}^n |u_i|  P_\Theta (\theta) d\theta =
 \sqrt{\frac{1}{2 \pi}} \frac{b_i}{\sigma^*} \int ||\bD_{\theta}^{\alpha} \bA||_1 P_\Theta (\theta) d\theta .
\end{equation} 
From (\ref{eqn.global_error}) note that
\begin{equation}\label{eqn.bound2}
|u_i| \leq K |\alpha|^p \sum_{j=1}^n | a_{ij} |
\end{equation}
and from this we obtain the result.
\end{proof}

\begin{corollary}
The EABF tends to zero in the same order as the numerical solver, that is $O(|\alpha|^p)$;
the same happens to the ABF assuming
$$
\int \left| \frac{y_i -  f(F_{\theta}(x_i))}{\sigma b_i} \right|
P_{ \Phi | \bY }( \theta,\sigma | \by ) d\theta d\sigma < \infty.
$$ 
for all $x_i$.
\end{corollary}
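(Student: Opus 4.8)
The plan is to separate the two claims. The statement about the EABF is essentially already contained in Theorem~\ref{theo.main}: the right-hand side of (\ref{eqn.main}) has the form $C|\alpha|^p$ with
\[
C = \sqrt{\frac{1}{2\pi}}\,\frac{1}{\sigma^*}\,K\,b_i \sum_{i=1}^n\sum_{j=1}^n |a_{ij}|,
\]
a constant that does not involve the discretization $\alpha$ (it depends only on the global error constant $K$, the noise scale via $\sigma^*$, and the fixed correlation structure $\bA$). Since the EABF equals $\|P_{\bY}(\cdot)-P^\alpha_{\bY}(\cdot)\|_{TV}$ and is therefore nonnegative, I would conclude $0 \le \mathrm{EABF} \le C|\alpha|^p$, i.e. $\mathrm{EABF}=O(|\alpha|^p)$, matching the solver order $p$.

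For the ABF at a fixed data set $\by$, I would go back to the pointwise-in-$\by$ inequality (\ref{eqn.ABF}) established midway through the proof of Theorem~\ref{theo.main}, which already bounds $\frac{1}{2}\left|1-P^\alpha_{\bY}(\by)/P_{\bY}(\by)\right|$ by a posterior integral plus a term of order $O(|\alpha|^{2p})$. The difference from the expected case is that I would \emph{not} average over $\by$; instead I would apply the elementary bounds (\ref{eqn.bound1}) and (\ref{eqn.bound2}) directly under the posterior, obtaining
\[
\frac{1}{2}\left|1-\frac{P^\alpha_{\bY}(\by)}{P_{\bY}(\by)}\right| \le \frac{1}{2}K|\alpha|^p \sum_{i=1}^n b_i\left(\sum_{j=1}^n |a_{ij}|\right)\int \frac{1}{\sigma}\left|\frac{y_i-f(F_{\theta}(x_i))}{\sigma b_i}\right| P_{\Phi|\bY}(\theta,\sigma|\by)\,d\theta\,d\sigma + O(|\alpha|^{2p}).
\]
Under the stated hypothesis each of the $n$ posterior integrals is a finite constant depending on $\by$ but not on $\alpha$, so the leading term is $O(|\alpha|^p)$ and dominates the $O(|\alpha|^{2p})$ remainder once $|\alpha|$ is small; hence $\mathrm{ABF}=O(|\alpha|^p)$.

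The step I expect to carry the real content is the integrability condition, which is precisely why the ABF needs a hypothesis that the EABF does not. In the expected case the inner average over $\by\mid\theta,\sigma$ collapses the standardized residual to the universal Gaussian constant $\sqrt{2/\pi}$, so no tail condition is ever needed. For fixed data the standardized residual must instead be integrated against the posterior, and this integral can diverge, e.g. through heavy posterior tails in $\theta$ or posterior mass of $\sigma$ near $0$ (the latter is visible in the explicit $1/\sigma$ weight above). I would therefore record the finiteness assumption exactly as in the statement, understanding that the $1/\sigma$ factor is absorbed by it (or, equivalently, that $\sigma$ is bounded away from $0$ on the posterior support), and note that in practice the condition holds under mild conditions on the prior and on $f\circ F_\theta$.
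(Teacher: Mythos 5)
Your proposal is correct and follows essentially the same route as the paper's own (very terse) proof: the EABF claim is read off directly from the bound (\ref{eqn.main}) in Theorem~\ref{theo.main}, and the ABF claim is obtained exactly as the paper indicates, by combining (\ref{eqn.ABF}) with (\ref{eqn.bound1}) (together with (\ref{eqn.bound2})) and invoking the stated integrability hypothesis to make the leading term $O(|\alpha|^p)$. Your observation that the $1/\sigma$ weight coming from (\ref{eqn.bound1}) must be absorbed into the hypothesis (or $\sigma$ kept away from zero under the posterior) is a genuine subtlety that the paper's one-line argument glosses over, but it refines rather than changes the approach.
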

\begin{proof}
The result is immediate for the EABF since the bound in (\ref{eqn.main}) is $O(|\alpha|^p)$.
For the ABF use (\ref{eqn.ABF}) and (\ref{eqn.bound1}) to obtain the result.
\end{proof}

\subsection{Remarks on Theorem~\ref{theo.main}}

\begin{itemize}

\item Note that if, as in many inverse problems formulations, the gaussian observation noise variance
 $\sigma^2$ assumed known, then we could think that its prior is a Dirac delta and indeed
$\sigma^* = \left( \int \frac{1}{s} g(s) ds \right)^{-1} = \sigma$.  In fact, we assume in the examples in
sections~\ref{sec:exaODE} and~\ref{sec:exaPDE} that $\sigma$ is known.

\item The bound in Theorem~\ref{theo.main} is a result of the uniform bound for the error
in the numerical solver, as stated in (\ref{eqn.global_error}).  However, a more precise
and at least theoretically interesting bound would be using the integral in the rhs of
(\ref{eqn.main1})
$$
\int ||\bD_{\theta}^{\alpha} \bA||_1 P_\Theta (\theta) d\theta < C .
$$
With this one may obtain an \textit{expected} bound for the EABF.  Since it is an
expected and not an uniform bound, potentially we may allow more solver error
in regions with less a priori mass, etc. 

However, our present approach considers using a
simpler global bound uniform for all $\theta$, perhaps less precise,
but importantly not adding more computational burden on problems already
computationally very demanding, as explained in the next remark. 

\item If we set the absolute uniform global error as
$| f(F^{\alpha}_{\theta}(x_i)) - f(F_{\theta}(x_i)) | \leq K_0 = K |\alpha|^p$ and,
as explained in section \ref{sec.BF},
if we let the $EABF \leq \frac{1}{20} = 0.05$ we expect nearly no difference in the numerical and the
theoretical posterior.  For uncorrelated data we have $a_{ij} = \delta_{i,j}$ and therefore 
$\frac{b_i}{n} \sum_{i=1}^n \sum_{i=1}^n | a_{ij} | = 1$.  Then, setting
$\sqrt{\frac{1}{2\pi}} \frac{n K_0 }{\sigma^*} \leq \frac{1}{20} $ we need
\begin{equation}\label{eqn.main_bound}
K_0  \leq \frac{k}{n} \sigma^* 
\end{equation}
where $k = \frac{1}{20} \sqrt{2 \pi} \approx 0.12$.  That is, we may tolerate an absolute
uniform error of up to 12\% of the expected standard deviation observation noise for $n=1$, 0.12\% for
$n=10$, 0.012\% for $n=100$ etc.  Reasonably, in the light of this analysis of the Bayesian
inverse problems, the tolerated error in the numerical solver is put in terms of the observational
noise expected and the sample size.  When more noise is expected more error may be
tolerated and as the sample size increases the numerical solver needs to be more precise.

This is the main application of our theorem, the uniform global error
to be accepted should be bounded in terms of the observational noise and sample size considered
and not solely as a property of the forward map at hand.   Our bound could be conservative,
but it is intended to be an useful and applicable tool serving as a reference for the numerical
solver precision.

\item Note that the bound in (\ref{eqn.global_error}) may be realized during computation.  That is,
the solver may be applied and its error estimated at the design points $x_i$,
when needing the FM for some specific $\theta$.  If the required bound is exceeded 
then the  discretization $\alpha$ could be altered to comply with the error bound.  That is, 
in practice there is no need to establish   (\ref{eqn.global_error}) theoretically, but rather
by a careful strategy for actual global error estimation (in the numerical analysis literature
these error estimates are derived from the so called output or
\textit{a posteriori} error estimates, but here we use a different name for obvious reasons).
The posterior distribution
in most cases is sampled using MCMC, which requires the approximated likelihood
at each of many iterations; an automatic process of global error estimation and control
will be required in order to comply with  (\ref{eqn.global_error}).

\item The result in the corollary regarding the ABF means that, for any fixed data $\by$ the 
BF  $\frac{P^\alpha_{\bY} (\by)}{P_{\bY} (\by)}$ tends to one in the same order as the solver
order, that is $O(|\alpha|^p)$.  This is a more general version of the main result of
\cite{Capistran2016} using the assumption
$$
\int \left| \frac{y_i -  f(F_{\theta}(x_i))}{\sigma b_i} \right|
P_{ \Phi | \bY }( \theta,\sigma | \by ) d\theta d\sigma < \infty 
$$
(i.e. the posterior expected standardrized absolute residuals are bounded).
In \cite{Capistran2016} they assumed instead that the parameter space $A$ is compact, which
implies the above.  The setting in \cite{Capistran2016} is only for ODE's while ours
considers more general forward maps, including ODEs and PDEs.

\end{itemize}

\section{Example using an ODE: Logistic growth}\label{sec:exaODE}

\begin{figure}
\begin{center}
\includegraphics[height=5.5cm, width=8cm]{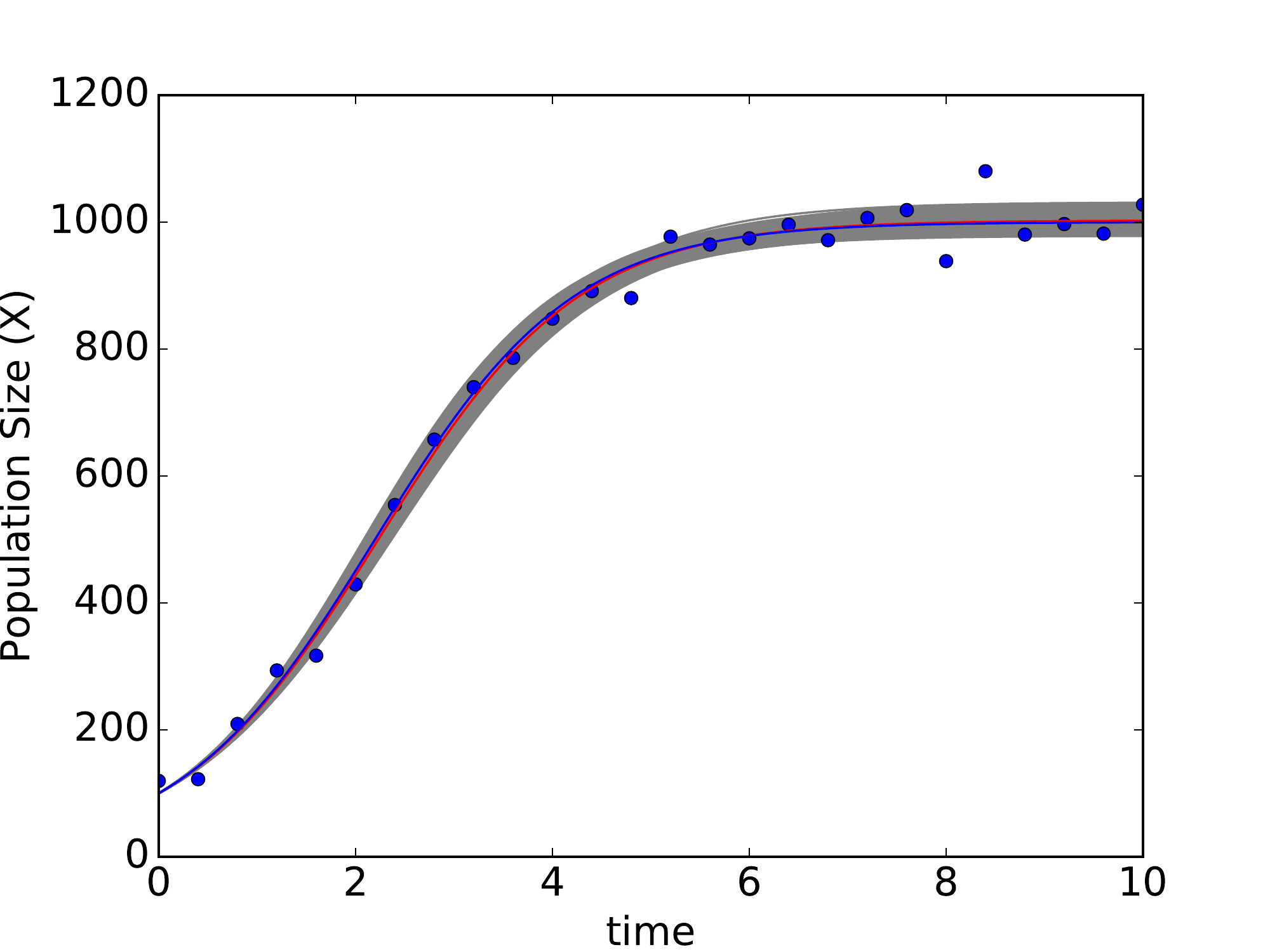}
\caption{\label{fig.LogGr_DataAndFit} Logistic growth example data, true model (blue)
and best (MAP) fit (red).  True parameters are $r=1$, $K=1000$ and $\sigma=30$.
Shaded areas represent the uncertainty in the model fit, as draws from the posterior distribution.}
\end{center}
\end{figure}

As a prove of concept, we base our first numerical study on the logistic growth model which is a common model of
population growth in ecology, medicine, among many other applications
\citep{Forys2003}. 
Let $X(t)$ be, for example, the size of a tumor to time $t$. The logistic growth
dynamics are governed by the following differential equation
\begin{equation}\label{eq:logistic}
 \frac{dX}{dt} = r X(t) (1-X(t)/K), \quad X(0)=X_0
\end{equation}
with $r$ being the growth rate and $K$ the carrying capacity e.g. $\lim_{t\rightarrow \infty} X(t) = K$.  
The ODE in (\ref{eq:logistic}) has an explicit solution equal to
$$
X(t) = \frac{KX_0}{X_0 + (K-X_0)e^{- r t}}.
$$
We simulate a synthetic data set with the error model $ y_i  =X(t_i) + \varepsilon_i$, where $\varepsilon_i \sim \mathcal{N}(0,\sigma^2)$, and  the following parameters $X(0)=100, \quad  r= 1, \quad K=1000,$ $\sigma = 30$.
The data are plotted in Figure \ref{fig.LogGr_DataAndFit}.
We consider $26$ observations at times $t_i$  regularly spaced between $0$ and $10$.
\cite{Capistran2016} also studied this example.

Since we have an analytic solution, if we run a numerical solver on the system we may calculate the
maximum absolute error of the solver, $K_0$ in (\ref{eqn.main_bound}), exactly by comparing with the analytic solution.
Moreover, in Appendix~\ref{sec.appendErrODE} we explain how the
global error may be estimated using Runge-Kutta solver methods.  We run a Cash--Karp RK method,
of order 5 \citep{Cash1990}, which enables us to produce and estimate $\hat{K}_0$ of $K_0$.

The error bound for the FM as stated in (\ref{eqn.main_bound}) is $\frac{k}{26} 30 \approx 0.13$.
The ODE with the initial condition defines our forward map $F_{\theta}$ ($q=p=1$) and we take
$f(x) = x$ as the observational functional.  To sample from the posterior distribution we use
a generic MCMC algorithm called the t-walk \cite{Christen2010}.

Regarding the numerical solver we start with a large step size of $0.1$,
that would maintain numerical stability, and calculate $\hat{K}_0$ and $K_0$.
If the solution does not comply with the bound as calculated by the estimate, that is
$\hat{K}_0 > 0.13$, a new solution is
attempted by reducing the step size by half, until the Runge-Kutta estimated global absolute errors
is within the bound, $\hat{K}_0 \leq 0.13$. 
The results are shown in figure~\ref{fig.LogGr_PostComp}.  Moreover, a fine step size Runge-Kutta
solver was also ran with step size of 0.005, for comparisons.
No difference was observed in both posterior distributions, and the posterior mean and MAP
estimators were basically identical.  However, a 93\% save was obtain by reducing CPU time 
from approximately 100 to 7 min, with 40,000 MCMC iterations.

Since in this case an analytic solution is available we also calculate the exact  maximum
absolute error $K_0$.  The average estimated error was $7.8\cdot10^{-3}$ while the average
exact error was $3.8\cdot10^{-5}$, for the adaptive step size method as describe above.
For the fixed time step method we had estimated and average errors
$2.1\cdot10^{-8}$ and $6.2\cdot10^{-12}$, respectively.  In both cases our error estimates where
some orders of magnitude higher than the true errors.  Certainly the fine (fixed) step solver is
quite more precise reaching more than 5 orders of magnitude larger precision.  However, in the light of theorem
\ref{theo.main} this increase precision is not worth the great increase in CPU time since
indeed results are basically the same using a coarser solver.  And a 93\% CPU time reduction was
obtained nevertheless our error estimates where two orders of magnitude higher.
Numerical errors are now viewed, not in a context free situation, but in the context of this inverse
problem, relative to sample size and expected observational errors, by using the bound in (\ref{eqn.main_bound}).


\begin{figure}
\begin{center}
\begin{tabular}{c c}
\includegraphics[height=4.5cm, width=6.5cm]{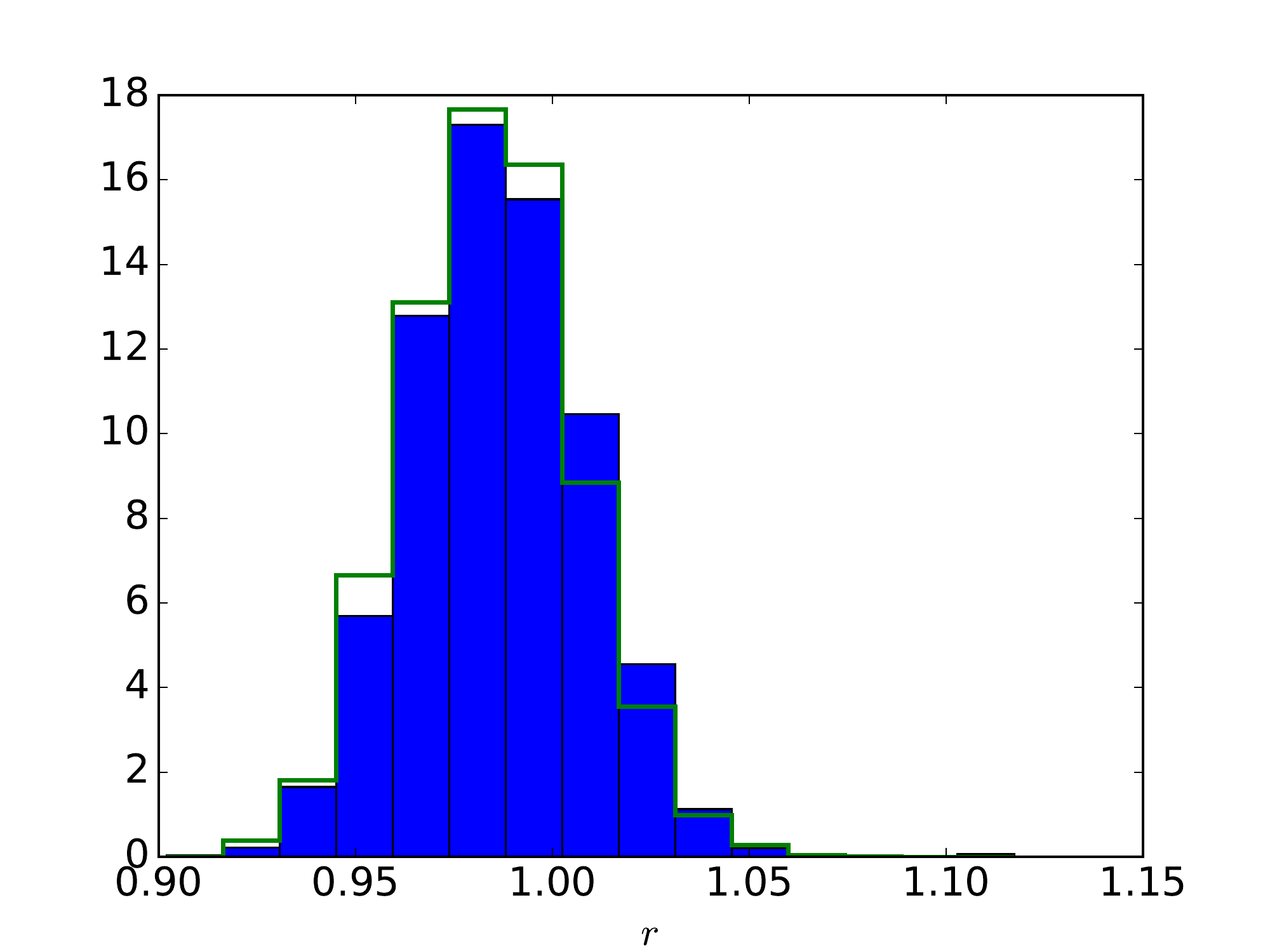} &
\includegraphics[height=4.5cm, width=6.5cm]{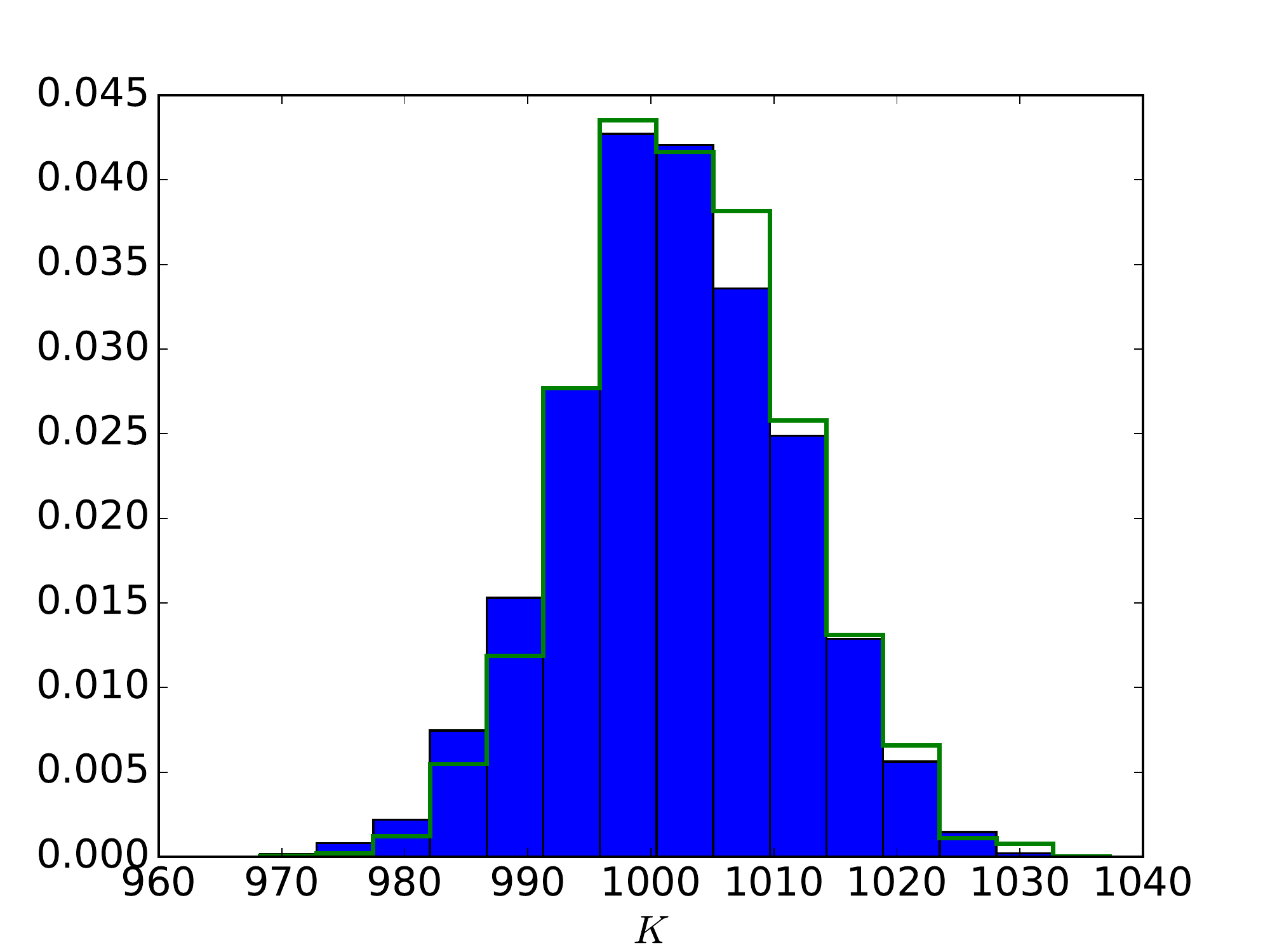} \\
(a) & (b) \\
\end{tabular}
\caption{\label{fig.LogGr_PostComp} Logistic growth example marginal posterior
distributions of $r$ (a) and $K$ (b), $\sigma=30$.
For a fine step size of 0.005 (blue histograms) we obtain basically the same results
(green histograms) as starting from a step size of 0.1 and restricting the estimated maximum
absolute error to the proposed bound, that is $\hat{K}_0 \leq \frac{k}{26} \sigma = 0.13$.
A 92\% CPU time reduction was obtained, from
approximately 103 to 7 min, running 40,000 iterations of the MCMC.}
\end{center}
\end{figure}

\section{Example using a PDE: Burgers' equation}\label{sec:exaPDE}

Burgers' is a fundamental PDE with many applications in fluid mechanics, acoustics, traffic flow modeling, among others
\cite{leveque2002finite}.

Let us consider the Riemann problem for the viscous Burgers' equation
\begin{equation}
\label{eq:burgers}
\begin{split}
u_{t} + uu_{z}&= \epsilon u_{zz}\\
u(z,0)&=
\begin{cases}
u_{L}  & z < z_{0}\\
u_{R} & z_{0} < z,
\end{cases}
\end{split}
\end{equation}
where  $u_{L}$, $u_{R}$, $z_{0}$ and $\epsilon$ are real numbers.
The direct problem is to compute $u = u( z, t)$ solving the initial value problem~(\ref{eq:burgers}).
We assume $u_{L}>u_{R}$ which would correspond to a shock wave in the inviscid limit $\epsilon=0$.

Using the Cole-Hopf transformation we can obtain the solution of problem (\ref{eq:burgers}) in closed form
to obtain
\begin{equation}
\label{eq:soln_burgers}
u( z, t) = u_{L}-\frac{u_{L}-u_{R}}{1 + q(z-z_{0},t) \exp(-\frac{u_{L}-u_{R}}{2\epsilon}(z -z_{0} - ct))}
\end{equation}
where $q( z, t) = \text{erfc}(\frac{ z - u_{R} t}{\sqrt{4\epsilon t}})/\text{erfc}(\frac{ z - u_{L}t}{\sqrt{ 4\epsilon t}})$
and $c=(u_{L}+u_{R})/2$;  see Chapter 4 of Whitham~\cite{Whitham1999}, for details.

The inference problem is to estimate $\theta=(u_{L}-u_{R}, z_{0})$ given observations
$u_{j} = u( z_{1}, t_{j} )$ at a fixed point in space $z_{1}$ for $j=1,..,k$.
That is, the locations are $x_j = ( z_{1}, t_{j} )$.
As in the previous example, the observation operator $f(\cdot)$ is the identity.

To numerically solve the Riemann problem in~(\ref{eq:burgers}), we use a classical 
second-order accurate finite-volume implementation of the viscous Burgers' equation with piecewise 
linear slope reconstruction with outflow boundary conditions, see Leveque~\cite{leveque2002finite}.
In this case, we build a grid with both space and time steps, namely  $\alpha = (\Delta z, \Delta t )$.  The standard procedure
is to fix $\Delta z$ and the temporal grid is determined by a Courant-Friedricks-Levy condition 
\begin{equation}
\label{eq:cfl}
\Delta t_n = c \frac{\Delta z}{\max | u(\cdot, t_{n-1}) |} .
\end{equation}

In the numerical example below we use $c=0.1$.
In this case $|\alpha| = \Delta z$ and $p=2$, that is, the numerical method is order 2, $O(|\alpha|^2)$.
Moreover, error estimates may be obtained from this solution, as explained in Appendix~\ref{sec.appenErrPDE}.

Using the analytic solution in (\ref{eq:soln_burgers}) and the error estimates,
we can establish the exact error $K_0$ and its estimate $\hat{K}_{0}$ as in the previous example.
From $\hat{K}_{0}$ we check if the adaptive method is within the bounds established in (\ref{eqn.main_bound}).
Note that, in the current case $\hat{K}_{0}\approx K_{0}$, provided $K_{0}$ is estimated through interpolation,
as shown in Appendix~\ref{sec.appenErrPDE}.

Again we use synthetic data using  the observation point $z_1=2.0$, $n=6$,
$t=\{0.0,0.1,0.2,\allowbreak 0.3,0.4,0.5\}$ and
$\sigma = 0.0115$ (we choose this standard error to obtain a signal-to-noise ratio of $100$).
The true parameter values are $u_{R}=2$, $u_{L}=1$, $z_{0}=1$.

We estimate the numerical posterior distribution with a high resolution spatial grid of $512$ points
for $z \in [0,4]$, and an adaptive grid, starting with a grid of 128.  If the bound is not met then the grid was doubled to 
256 and thereafter to 512.
We used 20,000 iterations of the twalk MCMC algorithm \cite{Christen2010} to simulate from the
posterior distributions using the high-resolution and adaptive grid solvers.
The execution time for the high resolution and adaptive grids are respectively $10.95$ h
and $4.27$ h respectively.
That is, in this case we obtained a 60\% decrease in CPU, again leading
to basically the same posterior distribution.

\begin{figure}
\begin{center}
\begin{tabular}{c c}
\includegraphics[height=5.cm, width=6cm]{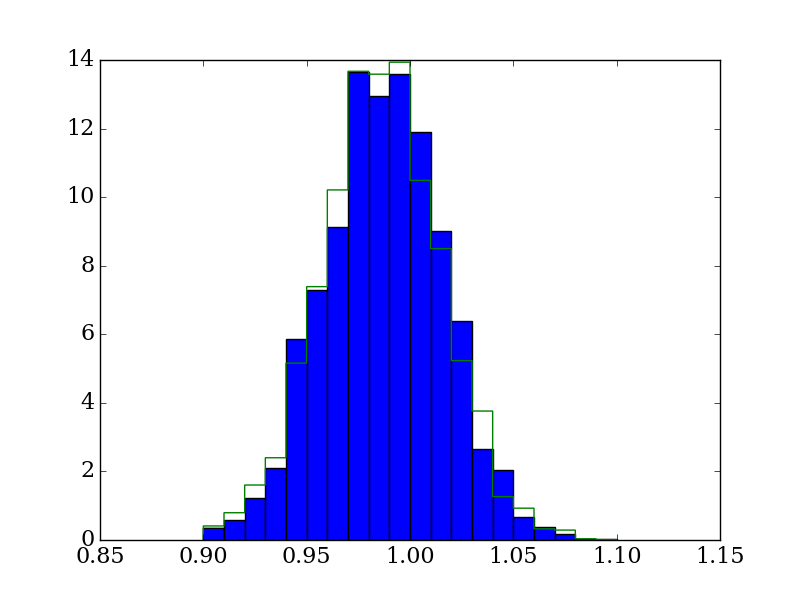} &
\includegraphics[height=5.cm, width=6cm]{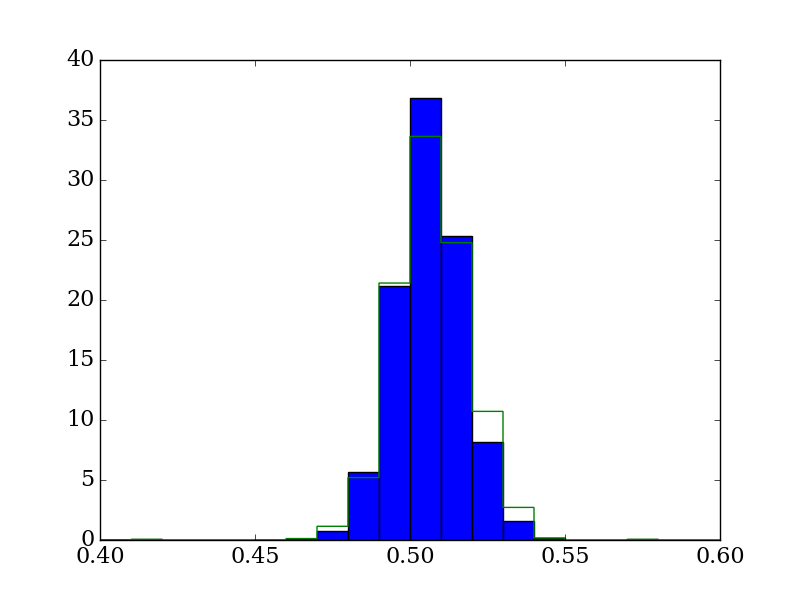} \\
(a) & (b) \\
\end{tabular}
\caption{\small \label{fig.Burgers_PostComp}  Burgers equation example, marginal posterior distribution for
(a) $u_{L}-u_{R}$ and (b) $z_{0}$.  Blue histograms are the result of the adaptive
grid using our bound, transparent green histogram denotes the posterior distribution from the high resolution
grid (see text).  Both histograms are basically the same, with a 60\% decrease in CPU time.}
\end{center}
\end{figure}

\section{Discussion}\label{sec:discussion}

The Bayesian UQ analysis of inverse problems continues to be a very challenging research topic.  In this paper we tried
to contribute to the development of this discipline by analyzing the relationship between error in the numerical solver
of the differential system under study and the corresponding induced error in the numerical posterior distribution.
Once it is established, under regularity conditions, that the numerical posterior distribution tends to the theoretical
posterior as the solver error tends to zero, we still need to decide to which precision run the solver.
Elaborating on previous work \cite{Capistran2016}, theorem~\ref{theo.main} suggest that by carefully
choosing a threshold for the global error in the solver we may
obtain a posterior distribution that is basically error free.  An intuitive perspective on our result is the following:
since there is observational error, we may tolerate certain small amount of error in the solver, which will end up blurred by the
observational error and therefore not noticeable in the posterior distribution.

Indeed, ``small'' is relative to the data error standard deviation $\sigma$ (and to the sample size),
as expressed in (\ref{eqn.main_bound}).  This also suggest that, in Bayesian UQ, solver error could be
viewed in the perspective of the inference problem at hand, potentially allowing for less precise and less computational
demanding solvers that, nevertheless induce basically no error in the resulting posterior distribution and therefore
in the uncertainty quantification of the problem at hand.
 
On the other hand, it is well known that the numerical solver  discretization $\alpha$ must 
be carefully tuned and $|\alpha|$ cannot be
increased to arbitrary values.  In complex cases, beyond some tight limits for the  discretization the numerical solver becomes unstable
and results as in (\ref{eqn:global_error}) cease to be valid.  Indeed, solver order and global error control are properties
valid for ``small'' $|\alpha|$.  Therefore, in a large set of case studies, little room will be available to decide
on an optimal solver  discretization.  Moreover, in very complex PDE solvers, just changing the  discretization is a very
demanding  enterprise, since there is as yet no automatic and reliable way to define the solver grid, as in for
example some spacio-temporal 3D PDE case studies \cite{Cui2011}.

Our examples were proof of concept only, but show promising results.  More importantly, we believe, since the
bound in (\ref{eqn.main_bound}) has a simple form, the posterior adaptive control described in both the
ODE and the PDE examples could be applied in other case studies.
However, this hinges on the availability of reliable, adaptive-like, solver error estimates
which, unfortunately, are not always readily available.  

A natural continuation of this work is to consider the performance of embedded methods such as Cash-Karp RK45,
as well as other feasible alternatives, on \textit{stiff} ODE problems in this Bayesian context.
For PDEs our current interest is on conservation laws
approximating the Forward Map  by the Discontinuous Galerkin method. For this method, error estimates rest on a solid foundation making our approach promising, see \cite{ hesthaven2007nodal, di2011mathematical}.

The study of our results in more complex UQ problems is also left for future research.  Note also that in many PDE
problems the actual unknown is a function, eg. an unknown boundary condition on a scattering problem, needed to
be recovered from data.  In such a case, the theoretical posterior distribution is infinite dimensional and, as we explained
from the onset, we did not consider such case.  The proof of our results in such arbitrary space setting is also left for
future research.

\bibliography{PostErrControl_and_EABF}

\begin{thebibliography}{}

\bibitem[\protect\citeauthoryear{Cai, Mohammad-Djafari, Legoupil, and
  Rodet}{Cai et~al.}{2011}]{CAIetAl2011}
Cai, C., A.~Mohammad-Djafari, S.~Legoupil, and T.~Rodet (2011).
\newblock Bayesian data fusion and inversion in x-ray multi-energy computed
  tomography.
\newblock In {\em Proceedings - International Conference on Image Processing,
  ICIP}, pp.\  1377--1380.

\bibitem[\protect\citeauthoryear{{Capistr{\'a}n}, {Christen}, and
  {Donnet}}{{Capistr{\'a}n} et~al.}{2016}]{Capistran2016}
{Capistr{\'a}n}, M., J.~{Christen}, and S.~{Donnet} (2016).
\newblock {Bayesian Analysis of ODE's: solver optimal accuracy and Bayes
  factors}.
\newblock {\em Journal of Uncertainty Quantification\/}~{\em 4\/}(1), 829--849.

\bibitem[\protect\citeauthoryear{Cash and Karp}{Cash and Karp}{1990}]{Cash1990}
Cash, J.~R. and A.~H. Karp (1990, sep).
\newblock A variable order runge-kutta method for initial value problems with
  rapidly varying right-hand sides.
\newblock {\em ACM Trans. Math. Softw.\/}~{\em 16\/}(3), 201--222.

\bibitem[\protect\citeauthoryear{Chama, Mansouri, Anani, and
  Mohammad-Djafari}{Chama et~al.}{2012}]{CHAMAetAl2012}
Chama, Z., B.~Mansouri, M.~Anani, and A.~Mohammad-Djafari (2012).
\newblock Image recovery from fourier domain measurements via classification
  using bayesian approach and total variation regularization.
\newblock {\em AEU - International Journal of Electronics and
  Communications\/}~{\em 66\/}(11), 897--902.

\bibitem[\protect\citeauthoryear{Christakos}{Christakos}{1992}]{christakos:1992}
Christakos, G. (1992).
\newblock {\em Random Field Models in Earth Sciences}.
\newblock Academic Press, Inc.

\bibitem[\protect\citeauthoryear{Christen and Fox}{Christen and
  Fox}{2010}]{Christen2010}
Christen, J. and C.~Fox (2010).
\newblock A general purpose sampling algorithm for continuous distributions
  (the t-walk).
\newblock {\em Bayesian Analysis\/}~{\em 5\/}(2), 263--282.

\bibitem[\protect\citeauthoryear{Cockburn}{Cockburn}{1999}]{cockburn1999simple}
Cockburn, B. (1999).
\newblock A simple introduction to error estimation for nonlinear hyperbolic
  conservation laws.
\newblock In {\em The Graduate Student's Guide to Numerical Analysis' 98}, pp.\
   1--45. Springer.

\bibitem[\protect\citeauthoryear{Cotter, Dashti, Robinson, and Stuart}{Cotter
  et~al.}{2009}]{COTTERetAl2009}
Cotter, S., M.~Dashti, J.~Robinson, and A.~Stuart (2009).
\newblock Bayesian inverse problems for functions and applications to fluid
  mechanics.
\newblock {\em Inverse Problems\/}~{\em 25}, 115008.

\bibitem[\protect\citeauthoryear{Cotter, Dashti, and Stuart}{Cotter
  et~al.}{2010}]{COTTER2010}
Cotter, S., M.~Dashti, and A.~Stuart (2010).
\newblock Approximation of bayesian inverse problems.
\newblock {\em SIAM Journal of Numerical Analysis\/}~{\em 48}, 322--345.

\bibitem[\protect\citeauthoryear{Cui, Fox, and O'Sullivan}{Cui
  et~al.}{2011a}]{CUIetAl2011}
Cui, T., C.~Fox, and M.~J. O'Sullivan (2011a).
\newblock Bayesian calibration of a large-scale geothermal reservoir model by a
  new adaptive delayed acceptance metropolis hastings algorithm.
\newblock {\em Water Resources Research\/}~{\em 47\/}(10).

\bibitem[\protect\citeauthoryear{Cui, Fox, and O'Sullivan}{Cui
  et~al.}{2011b}]{Cui2011}
Cui, T., C.~Fox, and M.~J. O'Sullivan (2011b).
\newblock Bayesian calibration of a large-scale geothermal reservoir model by a
  new adaptive delayed acceptance metropolis hastings algorithm.
\newblock {\em Water Resources Research\/}~{\em 47\/}(10), n/a--n/a.
\newblock W10521.

\bibitem[\protect\citeauthoryear{Di~Pietro and Ern}{Di~Pietro and
  Ern}{2011}]{di2011mathematical}
Di~Pietro, D.~A. and A.~Ern (2011).
\newblock {\em Mathematical aspects of discontinuous Galerkin methods},
  Volume~69.
\newblock Springer Science \& Business Media.

\bibitem[\protect\citeauthoryear{{Dunlop} and {Stuart}}{{Dunlop} and
  {Stuart}}{2015}]{Dunlop2015}
{Dunlop}, M.~M. and A.~M. {Stuart} (2015, aug).
\newblock The bayesian formulation of eit: Analysis and algorithms.
\newblock {\em ArXiv e-prints\/}.

\bibitem[\protect\citeauthoryear{Fall, Barat, Comtat, Dautremer, Montagu, and
  Mohammad-Djafari}{Fall et~al.}{2011}]{FALLetAl2011}
Fall, M.~D., E.~Barat, C.~Comtat, T.~Dautremer, T.~Montagu, and
  A.~Mohammad-Djafari (2011).
\newblock A discrete-continuous bayesian model for emission tomography.
\newblock In {\em Proceedings - International Conference on Image Processing,
  ICIP}, pp.\  1373--1376.

\bibitem[\protect\citeauthoryear{Fory\'s and Marciniak-Czochra}{Fory\'s and
  Marciniak-Czochra}{2003}]{Forys2003}
Fory\'s, U. and A.~Marciniak-Czochra (2003).
\newblock Logistic equations in tumour growth modelling.
\newblock {\em International Journal of Applied Mathematics and Computer
  Science\/}~{\em 13\/}(3), 317--325.

\bibitem[\protect\citeauthoryear{Fox, Haario, and Christen}{Fox
  et~al.}{2013}]{Fox2013}
Fox, C., H.~Haario, and J.~Christen (2013).
\newblock Inverse problems.
\newblock In P.~Damien, P.~Dellaportas, N.~Polson, and D.~Stephens (Eds.), {\em
  Bayesian Theory and Applications}, Chapter~31, pp.\  619--643. Oxford
  University Press.

\bibitem[\protect\citeauthoryear{Hazelton}{Hazelton}{2010}]{HAZELTON2010}
Hazelton, M.~L. (2010).
\newblock Bayesian inference for network-based models with a linear inverse
  structure.
\newblock {\em Transportation Research Part B: Methodological\/}~{\em 44\/}(5),
  674--685.

\bibitem[\protect\citeauthoryear{Hesthaven and Warburton}{Hesthaven and
  Warburton}{2007}]{hesthaven2007nodal}
Hesthaven, J.~S. and T.~Warburton (2007).
\newblock {\em Nodal discontinuous Galerkin methods: algorithms, analysis, and
  applications}.
\newblock Springer Science \& Business Media.

\bibitem[\protect\citeauthoryear{Jeffreys}{Jeffreys}{1961}]{Jeffreys61}
Jeffreys, H. (1961).
\newblock {\em Theory of Probability\/} (Third ed.).
\newblock Oxford, England: Oxford.

\bibitem[\protect\citeauthoryear{Kaipio and Fox}{Kaipio and
  Fox}{2011}]{KAIPIO&FOX2011}
Kaipio, J.~P. and C.~Fox (2011).
\newblock The bayesian framework for inverse problems in heat transfer.
\newblock {\em Heat Transfer Engineering\/}~{\em 32\/}(9), 718--753.

\bibitem[\protect\citeauthoryear{KASS and RAFTERY}{KASS and
  RAFTERY}{1995}]{KASS1995}
KASS, R. and A.~RAFTERY (1995, JUN 1995).
\newblock Bayes factors.
\newblock {\em JOURNAL OF THE AMERICAN STATISTICAL ASSOCIATION\/}~{\em 90},
  773--795.

\bibitem[\protect\citeauthoryear{Kozawa, Takenouchi, and Ikeda}{Kozawa
  et~al.}{2012}]{KOZAWAetAl2012}
Kozawa, S., T.~Takenouchi, and K.~Ikeda (2012).
\newblock Subsurface imaging for anti-personal mine detection by bayesian
  super-resolution with a smooth-gap prior.
\newblock {\em Artificial Life and Robotics\/}~{\em 16\/}(4), 478--481.

\bibitem[\protect\citeauthoryear{LeVeque}{LeVeque}{2002}]{leveque2002finite}
LeVeque, R.~J. (2002).
\newblock {\em Finite volume methods for hyperbolic problems}, Volume~31.
\newblock Cambridge university press.

\bibitem[\protect\citeauthoryear{Nissinen, Kolehmainen, and Kaipio}{Nissinen
  et~al.}{2011}]{NISSINENetAl2011}
Nissinen, A., V.~P. Kolehmainen, and J.~P. Kaipio (2011).
\newblock Compensation of modelling errors due to unknown domain boundary in
  electrical impedance tomography.
\newblock {\em IEEE Transactions on Medical Imaging\/}~{\em 30\/}(2), 231--242.

\bibitem[\protect\citeauthoryear{Quarteroni, Sacco, and Saleri}{Quarteroni
  et~al.}{2006}]{Quarteroni2006}
Quarteroni, A., R.~Sacco, and F.~Saleri (2006).
\newblock {\em Numerical Mathematics\/} (Second ed.), Volume~37 of {\em Texts
  in Applied Mathematics}.
\newblock Springer New York.

\bibitem[\protect\citeauthoryear{Wan and Zabaras}{Wan and
  Zabaras}{2011}]{WAN&ZABARAS2011}
Wan, J. and N.~Zabaras (2011).
\newblock A bayesian approach to multiscale inverse problems using the
  sequential monte carlo method.
\newblock {\em Inverse Problems\/}~{\em 27\/}(10).

\bibitem[\protect\citeauthoryear{Watzenig and Fox}{Watzenig and
  Fox}{2009}]{WATZENIG&FOX2009}
Watzenig, D. and C.~Fox (2009).
\newblock A review of statistical modelling and inference for electrical
  capacitance tomography.
\newblock {\em Measurement Science and Technology\/}~{\em 20\/}(5).

\bibitem[\protect\citeauthoryear{Whitham}{Whitham}{1999}]{Whitham1999}
Whitham, G.~B. (1999).
\newblock {\em Linear and nonlinear waves}.
\newblock Pure and Applied Mathematics (New York). John Wiley \& Sons, Inc.,
  New York.
\newblock Reprint of the 1974 original,A Wiley-Interscience Publication.

\bibitem[\protect\citeauthoryear{Woodbury}{Woodbury}{2011}]{WOODBURY2011}
Woodbury, A.~D. (2011).
\newblock Minimum relative entropy, bayes and kapur.
\newblock {\em Geophysical Journal International\/}~{\em 185\/}(1), 181--189.

\bibitem[\protect\citeauthoryear{Zhu, You, Wang, Li, and Mohammad-Djafari}{Zhu
  et~al.}{2011}]{ZHUetAl2011}
Zhu, S., P.~You, H.~Wang, X.~Li, and A.~Mohammad-Djafari (2011).
\newblock Recognition-oriented bayesian sar imaging.
\newblock In {\em 2011 3rd International Asia-Pacific Conference on Synthetic
  Aperture Radar, APSAR 2011}, pp.\  153--156.

\end{thebibliography}

\appendix

\section{Global error estimation in Runge-Kutta methods}\label{sec.appendErrODE}

For the reader's convenience, here we briefly describe how we may estimate the global error in solving
the following ODE initial value problem
\begin{equation*}
\frac{X(t)}{dt} = G( t, X, \theta), X(0) = X_0, 
 \end{equation*}
when using a Runge-Kutta (RK) type numerical method, see \cite{Quarteroni2006} chap. 11 for details.
We need the parameters $\theta$
of the ODE system to be fixed and therefore we write $G( t, X) = G( t, X, \theta)$ to ease notation.

A RK method can be written as follows.  Let $\alpha = h > 0$ be a (time) step size and define a uniform grid
such that $t_{n+1} = t_n + h$.  In this case $|\alpha| = h$.  Let 
$
u_{n+1} = u_n + h \sum_{i=1}^s b_i K_i;
$
$u_{n+1}$ is the approximation for $X(t_n)$ and
$
K_i = G(t_n+c_ih,u_n+h\sum_{j=1}^s a_{ij}K_j),\quad i=1,2,\ldots,s .
$
$s$ denotes the number of \emph{stages} of the method.

The components of the vector $\bc' = (c_1,c_2,\dots,c_s)$ need to satisfy
$
c_i=\sum_{j=1}^s a_{ij},\quad i=1,2,\ldots,s 
$.
Let $\bA = (a_{ij})$ and $\bb' = (b_1,b_2,\dots,b_s)$, any RK method is defined by the matrix
$\bA$ and the vectors $\bb$ and $\bc$.
To have  an explicit method we require $a_{ij}=0$ for $j \geq i$, with $i=1,2,\ldots,s$.  We used
an explicit method in our implementation.

The local truncation error $\tau_{n+1}$ at node $t_{n+1}$ of the RK method is defined 
as the error made in step $n+1$ of the solver if starting at the exact value $X(t_n)$, that is
$
\tau_{n+1} = X(t_{n+1}) - X(t_n) - h \sum_{i=1}^s b_i K_i .
$.
The RK method is \emph{consistent}, if $\tau = max_n\vert \tau_n \vert \to 0$ as $h\to 0$. This happens if and only if
$\sum_{i=1}^sb_i=1$.  The RK method is of order $p$ if $\tau = O(h^{p+1})$ as $h\to 0$ and it is known
that $s \geq p$.  The global
(truncation) error at knot $t_n$ is defined as the error made by the solver, that is
$
e_n = X(t_n) - u_n .
$
It is clear that $e_n = \sum_{i=1}^{n} \tau_n$.  Under regularity conditions for a RK method of order $p$ we have
$K_0 = max_n \vert e_n \vert = O(h^p)$ as $h\to 0$.  That is, the maximum absolute global error $K_0$  is of
order $p$ as $h\to 0$.

The strategy described here to estimate $e_n$ is to consider two \textit{embedded} RK methods to solve the system,
one with order $p$ and one
with order $p-1$, both with the same number of stages $s$ and the same matrix $\bA$ and vector
$\bc$, only with different vectors $\bb$ and $\hat{\bb}$, respectively. 

Let $u_{n+1}$ be the $n+1$ estimation of $X(t_{n+1})$ of the $p$ order method and
let $y_{n+1}$ be obtained by the $p-1$ order method by starting at $u_n$, namely
$$
u_{n+1} = u_n + h\sum_{i=1}^s b_i K_i ~~\text{and}~~ y_{n+1}=u_n+h\sum_{i=1}^{s} \hat{b}_i K_i .
$$
An estimation of the local truncation error at $t_{n+1}$ is
$\hat{\tau}_n = u_{n+1} - y_{n+1} = h\sum_{i=1}^{s} (b_i - \hat{b}_i) K_i$
which is basically a byproduct of the $p$ order solver.  

The estimate of the global truncation error at knot $t_n$
is then $\hat{e}_n = \sum_{i=1}^{n} \hat{\tau}_n$.

\section{Global error estimation in the Burgers PDE solver}\label{sec.appenErrPDE}

In order to solve numerically the initial condition problem for the viscous Burgers in (\ref{eq:burgers}) 
we used a second order explicit finite-volume method to handle the advective flux. 
On the other hand, second order time-stepping is accomplished through Crank-Nicolson 
updating implicitly in the viscosity term. We discretize the solution using homogeneous 
Neumann conditions on a space interval $I=[0,4]$ adding two ghost cells at each endpoint. 
In order to march in time we enforce the Courant-Friedrichs-Levy condition through the 
time-stepping rule (\ref{eq:cfl})
where $c=0.1$ and $u_{h}$ is the numerical solution at time $t$. Discretization is implemented 
setting a number of space points, e.g. $N=2^{9}$. Of note, the time step is adapted to obtain
the solution at prescribed observation times $t=\{0.0,0.1,0.2,0.3,0.4,0.5\}$.

If we denote the numerical solution by $u_{h}$, then the residue is 
\begin{equation}
\label{eq:residue}
R_{h}(u)=
\frac{\partial u_{h}}{\partial t} + u \frac{\partial u_{h}}{\partial z} - \epsilon \frac{\partial^{2}u_{h}}{\partial z^{2}} .
\end{equation}

We apply our main result using the following after the fact
error estimate for continuous approximations to nonlinear viscous hyperbolic conservation laws;
see theorem 5.2 of Cockburn \cite{cockburn1999simple}.
Let $v$ be the entropy solution (of problem~(\ref{eq:burgers})) and let $u$ be a 
continuos approximation. Then
\begin{equation}
\label{eq:traffic_bound}
||u(T)-v(T)||_{L^{1}(\mathbb{R})}\leq\Phi(v_{0},u,T)
\end{equation}
where
\begin{displaymath}
\Phi(v_{0},u,T)=||u(0)-v_{0}||_{L^{1}(\mathbb{R})}+||R_{h}(u)||_{L^{1}(0,T)\times\mathbb{R}}
+C(u)\sqrt{\epsilon}
\end{displaymath}
and
\begin{displaymath}
C^{2}(u) = 8|u|_{L^{\infty}(0,T;TV(R))}|u|_{L^{1}(0,T;TV(R))} .
\end{displaymath}

Note that the finite volume method that we have used lets the residual go to zero quadratically. 
Hence, we have the following application of Cockburn theorem. Let $v(t)$ denote the analytic solution 
(\ref{eq:soln_burgers}) of problem (\ref{eq:burgers}),
and let $u_{h}$ denote the finite volume 
solution at times
$t^{n}$, $n=1,...,N$.  Following Cockburn analysis, let us denote by $\mathcal{U}$ the set of all the interpolates $u$ such
that $u(t^{n})=u_{h}(t^{n})$ for $n=1,...,N$. Then application of Theorem 5.2 of Cockburn
gives
\begin{displaymath}
||u_{h}(t^{n})-v(t^{n})||_{L^{1}(\mathbb{R})}\leq\inf_{u\in\mathcal{U}}\Phi(v_{0},u;t^{n}),\;n=1,..,N
\end{displaymath}

Let us define
\begin{equation}
\label{eq:ratio}
r(u_{h},t^{n})=\frac{\Phi(v_{0},u,T)}{||u_{h}(t^{n})-v(t^{n})||_{L^{1}(\mathbb{R})}}.
\end{equation}
In order to estimate $\hat{K}_{0}\approx K_{0}$ we take $2^{N}+1$ grid points on the spatial domain 
$z\in[0,4]$ for $N=6,7,8,9,19$ and interpolate $r(u_{h},1)=1+K_{0}h^{2}$, where $h=1/\Delta z$.

\end{document}